\def\BibTeX{{\rm B\kern-.05em{\sc i\kern-.025em b}\kern-.08em
    T\kern-.1667em\lower.7ex\hbox{E}\kern-.125emX}}
\begin{document}
\title{A Lazy Approach for Efficient Index Learning}

\settopmatter{authorsperrow=4}

\author{Guanli Liu}
\affiliation{\institution{University of Melbourne}}
\email{guanli@student.unimelb.edu.au}

\author{Lars Kulik}
\affiliation{\institution{University of Melbourne}}
\email{lkulik@unimelb.edu.au}

\author{Xingjun Ma}
\affiliation{%
  \institution{Deakin University}
}
\email{daniel.ma@deakin.edu.au}

\author{Jianzhong Qi}
\affiliation{\institution{University of Melbourne}}
\email{jianzhong.qi@unimelb.edu.au}


\begin{abstract}
Learned indices using neural networks have been shown to outperform traditional indices such as B-trees in both query time and memory.
However, learning the distribution of a large dataset can be expensive, and updating learned indices is difficult, thus hindering their usage in practical applications. In this paper, we address the efficiency and update issues of learned indices through \emph{agile model reuse}. We pre-train learned indices over a set of synthetic (rather than real) datasets and propose a novel approach to reuse these pre-trained models for a new (real) dataset. The synthetic datasets are created to cover a large range of different distributions.
Given a new dataset $\mathcal{D}_T$, we select the  learned index of a synthetic dataset  similar 
to $\mathcal{D}_T$, to index $\mathcal{D}_T$.
We show a bound over the indexing error when a pre-trained index is selected.
We further show how our techniques can handle data updates and bound the resultant indexing errors. 
Experimental results on synthetic and real datasets confirm the effectiveness and efficiency of our proposed \emph{lazy} (model reuse) approach.
\end{abstract}


\maketitle

\section{Introduction}\label{sec:intro}
\vspace{-0.5mm}
\emph{Learned indices} using neural networks have been shown to outperform traditional indices such as B-trees in both query time and memory~\cite{RMI,marcus2020benchmarking,pgm}.
Given a dataset (e.g., a database table), an index is a structure that maps the index key $p.key$ of a data point $p$ (e.g., a data record) to its storage address $p.addr$. The idea of learned indices is to train a machine learning model $\mathcal{F}$ (e.g., a neural network) to approximate the mapping from 
$p.key$ to $p.addr$. Previous work has shown that such learned indices can be simpler and more query-efficient than traditional indices. The trained model $\mathcal{F}$ can predict $p.addr$ with a bounded error range $[err_l, err_u]$, i.e.,  the data point $p$ can be found in the range of $[\mathcal{F}(p.key)+err_{l}, \mathcal{F}(p.key)+err_{u}]$ \cite{pgm}.


While learned indices have efficient query procedures, they are prone to slow building and updates, since machine learning models are expensive to train, and once trained, they are difficult to update. Even with simple models such as  linear splines, cubic splines, or linear regression, a learned index such as the \emph{recursive model index} (RMI)~\cite{RMI} is two orders  
of magnitude slower to build than a B-tree~\cite{marcus2020benchmarking}. Techniques that learn indices in a single pass such as \emph{RadixSpline}~\cite{RadixSpline} can be built faster, but they tend to produce sub-optimal indices of large sizes and lower query efficiency. The high costs in model training also prevent the retraining of learned indices for every data update. Existing learned indices~\cite{pgm,fiting_tree,ALEX} avoid model retraining by storing newly inserted points into additional structures, which inevitably adds query processing costs. This limits the applicability of learned indices in dynamic scenarios where there are frequent dataset creation or updates, which is  common in practice, for example, to index senor data or data from scientific studies (simulations)~\cite{DBLP:journals/jcheminf/ThibaultRFC14}. 

In this paper, we aim to address the efficiency issues in training and updating learned indices without hindering their query efficiency. Our solution is inspired by \emph{domain adaptation}~\cite{domain_adaptation}.
Given a model $\mathcal{M}_S$ trained on an existing (source) dataset $\mathcal{D}_S$, domain adaptation reuses $\mathcal{M}_S$ for a new (target) dataset $\mathcal{D}_T$ by fine-tuning $\mathcal{M}_S$ over $\mathcal{D}_T$. This avoids training a new model on $\mathcal{D}_T$ from scratch, which can be extremely time-consuming.

  
A key requirement for successful adaptation of $\mathcal{M}_S$ to $\mathcal{D}_T$ is that $\mathcal{D}_S$ and $\mathcal{D}_T$ should have similar distributions~\cite{DA_bounds,DA_regression}. Otherwise, $\mathcal{M}_S$ may yield large errors on $\mathcal{D}_T$. This is important in our problem as we aim to further skip fine-tuning on $\mathcal{D}_T$, to achieve fast updates. This motivates us to generate synthetic datasets to cover a wide range different data distributions 
and pre-train reusable indices on such datasets. Our dataset generation is based on the \emph{cumulative distribution function} (CDF). Given a new dataset $\mathcal{D}_T$, we measure the CDF similarity between $\mathcal{D}_T$ and the synthetic datasets. We select a model  pre-trained on a synthetic dataset similar 
to $\mathcal{D}_T$ as the index model for $\mathcal{D}_T$.

\setlength\tabcolsep{4pt} 
\begin{table}
\renewcommand\arraystretch{0.6} 
    \small  
    \centering
    \caption{Two example datasets $\mathcal{D}_S$ and $\mathcal{D}_T$.}
    \label{tab:data_sets_exp}
    \begin{tabular}{c|c|c|c|c|c|c|c|c|c|c}
        \toprule
        $\mathcal{D}_S$ & 0.1 & 0.2 & 0.3 & 0.5 & 0.6 & 0.7 & 0.8 & 0.8 & 0.9 & 1.0 \\
        \hline
        $\mathcal{D}_T$ & 0.1 & 0.2 & 0.25 & 0.3 & 0.4 & 0.5 & 0.6 & 0.8 & 0.9 & 1.0\\
        \bottomrule
    \end{tabular}
\end{table}

\begin{figure}[htp]
  \centering
  \includegraphics[width=0.3\textwidth]{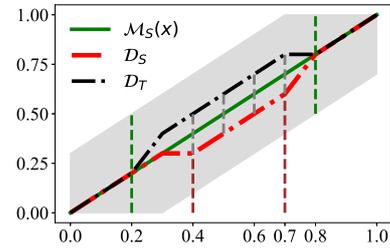}
\vspace{-1mm}
  \caption{CDFs of $\mathcal{D}_S$ and $\mathcal{D}_T$ \ref{tab:data_sets_exp} (best viewed in color).}\label{fig:cdfs_model_reuse}
  \vspace{1mm}
\end{figure}

Table~\ref{tab:data_sets_exp} and Fig.~\ref{fig:cdfs_model_reuse} illustrate two example datasets (both sorted in ascending order) and their corresponding CDFs. 
We denote the CDFs of $\mathcal{D}_S$ and $\mathcal{D}_T$ as $cdf_S(\cdot)$ and $cdf_T(\cdot)$, respectively.
In this toy example, an index model $\mathcal{M}_S$ is learned to predict the rank $p.rank$ (or percentile) of point $p \in \mathcal{D}_S$ based on its search key $p.key$, that is, $p.rank \approx \mathcal{M}_S(p.key)$ and $p.addr \approx \mathcal{M}_S(p.key)\cdot |\mathcal{D}_S|$.
This effectively learns $cdf_S(p)$. For $p \in \mathcal{D}_S$, $cdf_S(p)$ measures the probability of a value less than or equal to $p$, which is also the rank of $p$ in $\mathcal{D}_S$. 
When reusing $\mathcal{M}_S$ for $\mathcal{D}_T$, 
the additional prediction errors introduced can be bounded with respect to the dissimilarity between $cdf_S(\cdot)$ and $cdf_T(\cdot)$. This suggests that, if we can generate synthetic datasets that cover a sufficiently large area of the space of all possible CDFs, the learned indices on the synthetic datasets can be reused for any new dataset with bounded prediction errors.
Since the CDF of a dataset with $n$ points takes $O(n)$ time to compute, we further propose a histogram based approximation of CDF, with bounded 
errors, to reduce the computation time to only $O(\log n)$.

 
We use a \emph{model reuse threshold} $\epsilon \in (0, 1]$ to help determine whether to reuse a pre-trained model for a new dataset $D_T$. 
When the CDF similarity between $\mathcal{D}_T$ and a synthetic dataset $\mathcal{D}_S$ is greater than or equal to $\epsilon$, we reuse the model $\mathcal{M}_S$ pre-trained on $\mathcal{D}_S$ to index $\mathcal{D}_T$. 
Based on $\epsilon$, we further derive the maximum additional prediction error of $\mathcal{M}_S$ on $\mathcal{D}_T$, and we derive the number of synthetic datasets to be generated. 
Since our model reuse procedure is fast and flexible, we call it \emph{agile model reuse}. 
Following a similar idea, we adapt agile model reuse to handle updates. When the similarity between the CDFs of a dataset $\mathcal{D}_T$ and
its updated version $\mathcal{D}_T'$ is greater than or equal to $\epsilon$, we can reuse model $\mathcal{M}_T$ trained on $\mathcal{D}_T$ without re-training.

To showcase the applicability of our agile model reuse technique, we integrate it into the RMI learned index~\cite{RMI}. We show that agile model reuse can significantly reduce the training time of the sub-models in RMI. 
We then propose a new index structure named \emph{recursive model reuse tree} (RMRT) with built-in agile model reuse support. RMRT is designed to be adaptive to different data distributions: it builds sub-models with more layers for more dense regions of a dataset. This is particularly useful for skewed data, which has not been addressed in RMI.

In summary, our key contributions are:

\begin{enumerate}
    
    
    \item
    We propose an agile model reuse technique to pre-train a set of models on synthetic datasets and adaptively select the most suitable model to index a new dataset with respect to a model reuse threshold $\epsilon$. 
    We show how to bound the additional model prediction error given $\epsilon$.
    \item
    We propose a new index structure named RMRT, which has built-in agile model reuse support and adaptively builds an unbalanced hierarchical structure for better indexing of skewed data.
    
    \item 
    Extensive experiments on synthetic and real datasets show that agile model reuse can accelerate the building time of neural network-based learned indices by two orders of magnitude,  while retaining the lookup efficiency. Further, our agile model reuse based index RMRT is faster than RMI-based structures to build, while it outruns  all baseline models in lookup performance.
\end{enumerate}

\vspace{1mm}
\vspace{-0.1in}
\section{Related Work}\label{sec:related_work}
\vspace{-0.5mm}
A learned index~\cite{RMI,ALEX,ASLM,fiting_tree,ML-Index,pgm} learns a mapping from a search key to the storage address of a data point with a machine learning model. 
Due to limits on the learning capacity of a single model, existing learned indices such as RMI~\cite{RMI} build a hierarchy of models to index large datasets. The idea is similar to that of traditional hierarchical indices:   top-level models predict partitions of the data points (i.e., the  lower-level model in which a point is indexed), while leaf-level models  predict the storage locations. 
The training and updates of a hierarchical learned index can be very expensive, especially when neural networks are used. 
Follow-up studies aim to bound the prediction error of the learned model. For example, PGM~\cite{pgm} builds a hierarchical learned index bottom up, with a worst-case prediction error bound $\epsilon$ on every learned model. 
The building time of such learned indices is also high. 

\textbf{Update handling.} 
Updates may change the data distribution from which an index model is learned and amplify the model prediction error. Existing studies have focused on handling insertions, since points deleted can be simply flagged as ``removed'' with a light impact on  query processing. For query correctness, one may  update the prediction error bounds to ${err}_l - i$ and $err_u + i$ after $i$ insertions. 
Tighter bounds are achieved  by keeping track of the error bound drifts for a number of \emph{reference points}~\cite{DBLP:conf/sigmod/0001H19}. At query time, the closest reference points on both sides of the query point are fetched, and their error bound drifts are used to estimate the updated error bounds with a linear interpolation. 
PGM~\cite{pgm} uses two different strategies to handle insertions. For time series data insertion, it can either add a new point to the last model or add a new model to handle the new point. For arbitrary insertion, it applies the \emph{logarithmic method}~\cite{design_dynamic_data_structures} and builds a series of PGM indices for the insertions.
All these indices need extra structures to handle updates, which impact the query efficiency. 

\textbf{Domain adaptation.} 
The idea of domain adaptation is to adapt a model pre-trained on a dataset $\mathcal{D}_S$ for a new problem with a different dataset $\mathcal{D}_T$. A key step is to measure the similarity between the distributions of $\mathcal{D}_S$ and $\mathcal{D}_T$. The $L_1$ distance is a often used~\cite{DA_bounds}. It does not suit our problem because it cannot help bound the index prediction error on $\mathcal{D}_T$.  
The \emph{discrepancy}~\cite{DA_regression} is another a measure. It is designed based on testing whether the training loss differs significantly on $\mathcal{D}_S$ and $\mathcal{D}_T$.
This is inapplicable because we require a highly efficient test to determine online whether a model can be reused for $\mathcal{D}_T$. 
Typical domain adaptation techniques also fine-tune the pre-trained model on $\mathcal{D}_T$, while we skip this step for efficiency considerations.

\vspace{1mm}
\vspace{-0.1in}
\section{Agile Model Reuse}\label{sec:similarity}
\vspace{-0.5mm}

Given a new or updated dataset $\mathcal{D}_T$, we aim to construct a learned index $\mathcal{M}_T$ for $\mathcal{D}_T$ with a high efficiency. 

We first present an overview of our agile model reuse technique. We then detail its key components, including dataset similarity measurement, synthetic dataset generation, model adaptation, and error bounding. 
We will also showcase the applicability of our technique on an existing and a novel learned indices. 

    \begin{algorithm}
    \setstretch{0.8}
        \begin{small}
            \caption{Agile Model Reuse} \label{alg:mr}
            \KwIn{$\mathcal{D}_T$, $\mathcal{Q}_{MP}$}
            \KwOut{$\mathcal{M}_T$}
            \BlankLine
            \For { $<\mathcal{D}_S, \mathcal{M}_S>  \in \mathcal{Q}_{MP}$} {
                $dist \leftarrow cal\_distance(\mathcal{D}_S, \mathcal{D}_T)$\;
                
                \If{$dist \leq 1-\epsilon$} {
                    $\mathcal{M}_T \leftarrow adapt\_model(\mathcal{M}_S, \mathcal{D}_S , \mathcal{D}_T)$\;
                    return $\mathcal{M}_T$\;
                }
            }
            Train model $\mathcal{M}_T$ over $\mathcal{D}_T$\; 
            $\mathcal{M}_T.max\_abs\_err \leftarrow \mathcal{M}_T$.calc\_err($\mathcal{D}_T$)\;
            $\mathcal{Q}_{MP}.enqueue(<\mathcal{D}_T, \mathcal{M}_T>,\mathcal{M}_T.max\_abs\_err)$\;
            \Return $\mathcal{M}_T$\;
        \end{small}
    \end{algorithm}
    
\textbf{Agile model reuse overview.}
Algorithm~\ref{alg:mr} summarizes our 
agile (i.e., fast and flexible) model reuse technique.  
We pre-train models on synthetic datasets (detailed later)  which are reused to index $\mathcal{D}_T$. 
The pre-trained models are organized in a priority queue  $\mathcal{Q}_{MP}$.
Each entry in $\mathcal{Q}_{MP}$ contains the information of a synthetic dataset $\mathcal{D}_S$  and its corresponding trained model $\mathcal{M}_S$. 
The trained models are sorted by their error bounds in ascending order. Algorithm~\ref{alg:mr} traverses $\mathcal{Q}_{MP}$ (line 1), calculates the distance (dissimilarity) between $\mathcal{D}_T$ and each synthetic dataset $\mathcal{D}_S$ (line~2), and finds the first model where the distance is smaller than or equal to the model reuse threshold  $\epsilon \in (0, 1]$~(line 3). If such a model is found, the model and its error bounds are adapted based on the dataset distance (line 4, detailed later), and the adapted model is returned as $\mathcal{M}_T$ (line~5). 
Otherwise, we train a new model $\mathcal{M}_T$ for $\mathcal{D}_T$ (line 6) and obtain the error range ($err_{u}-err_{l}$, line 7). We enqueue and return the model (lines 8 and 9).

We use $\epsilon$ to control the dataset similarity in model reuse. 
A smaller $\epsilon$ allows the algorithm to return a model earlier, which may not have a high similarity with $\mathcal{D}_T$ and may lead to low prediction accuracy and high query costs.
In contrast, a larger $\epsilon$ can cost more time in  traversing $\mathcal{Q}_{MP}$ but gain a more fitted model with high prediction accuracy and low query costs. As $\epsilon$ increases in range $(0,1]$, the requirement for agile model reuse is getting higher. We elaborate on the effect of $\epsilon$ in Section~\ref{sec:experiments}.

\textbf{Dataset similarity measurement.}
A model $\mathcal{M}_S$ for dataset $\mathcal{D}_S$ 
effectively learns a CDF of $\mathcal{D}_S$. 
To reuse  $\mathcal{M}_S$ on $\mathcal{D}_T$, it is important that the CDFs of $\mathcal{D}_S$ and $\mathcal{D}_T$ are similar. We thus define the similarity between $\mathcal{D}_S$ and $\mathcal{D}_T$ by their CDFs. 



\begin{definition}[Similarity between two datasets] \label{def:similarity}
Given two datasets $\mathcal{D}_S$ and $\mathcal{D}_T$, their \emph{similarity} is defined by the maximum distance between their CDFs:
\vspace{-0.05in}
\begin{equation}\label{eq:setsim}
\small
sim(\mathcal{D}_S, \mathcal{D}_T) = 1 - \sup_x|cdf_S(x) - cdf_T (x)|
\vspace{-0.05in}
\end{equation}
Here, $\sup_x|cdf_S(x) - cdf_T (x)|$ is the maximum gap between
$cdf_S(x)$ and $cdf_T(x)$. 
We use $sim(\mathcal{D}_S,\mathcal{D}_T)$ 
and $dist(\mathcal{D}_S,\mathcal{D}_T) = 1 - sim(\mathcal{D}_S,\mathcal{D}_T)$ to denote the similarity and the distance (dissimilarity) between $\mathcal{D}_S$ and $\mathcal{D}_T$, respectively.
\end{definition}

This similarity metric is also based on the \emph{Kolmogorov–Smirnov} (KS) test~\cite{KS}, which takes $O(|\mathcal{D}_S|+|\mathcal{D}_T|)$ time to compute, assuming that both datasets are sorted already. This may be too expensive for online computation for large datasets. 
We present an approximate similarity metric for faster computation. 

Our approximate similarity metric uses \emph{relative frequency histograms} (``histograms'' for short) that discretize the data domain into bins and record relative frequencies (i.e., percentages) of the data points in each bin.
A histogram is a discrete approximation of the \emph{probability density function} (PDF) of a dataset. We use it to compute an approximation of the CDF and to compute an approximation of $dist(\mathcal{D}_S, \mathcal{D}_T)$, denoted by $dist_h(\mathcal{D}_S, \mathcal{D}_T)$.



Algorithm~\ref{alg:histogram_similarity} summarizes the computation process, 
which takes as input histograms of $\mathcal{D}_S$ and $\mathcal{D}_T$ with $m$ (a system parameter) bins each, denoted by $H_S$ and $H_T$.  We use $H_S[i]$ and $H_T[i]$ to denote the $i$-th bins and their relative frequencies. 
The sum of the probabilities of first $i$ bins of $H_S$ and $H_T$ are denoted by $P_S$ and $P_T$, i.e., $P_S = \sum_{j=0}^{i} H_S[i]$ and $P_T = \sum_{j=0}^{i} H_T[i]$. 



The algorithm computes $dist_h(\mathcal{D}_S, \mathcal{D}_T)$ ($dist_h$ for short) by looping through the bins (lines 2 to 4). 
In the $i$-th iteration ($i \in [0, m-1]$), it computes $H_S[i] + P_S$. This is the maximum $cdf_S(x)$ for any $x \in (\frac{i}{m}, \frac{i+1}{m}]$ (in our synthetic datasets, $x \in [0, 1]$), because $P_S$ has accumulated the probabilities for $x \le \frac{i}{m}$ while $H_S[i]$ further adds the probability for $x \in (\frac{i}{m}, \frac{i+1}{m}]$. Meanwhile, $P_T$ is the minimum $cdf_T(x)$ for any $x \in (\frac{i}{m}, \frac{i+1}{m}]$. Thus, $\forall x \in (\frac{i}{m}, \frac{i+1}{m}]$: 
    \vspace{-0.1in}
    \begin{equation}
    \small
    \begin{array}{cc}
        H_S[i] + P_S - P_T \ge cdf_S(x) - cdf_T(x) \\
        H_T[i] + P_T - P_S \ge cdf_T(x) - cdf_S(x)
         \end{array}
    \end{equation}
After going through all bins, we have: 
    \vspace{-0.05in}
    \begin{equation}
    \small
    \begin{array}{cc}
        dist_h(\mathcal{D}_S, \mathcal{D}_T) 
         \ge |cdf_T(x) - cdf_S(x)|, \forall x \in (0, 1]
    \end{array}
    \vspace{-0.05in}
    \end{equation}
Thus,  
        $dist_h(\mathcal{D}_S, \mathcal{D}_T) \ge dist(\mathcal{D}_S, \mathcal{D}_T)$.


\begin{algorithm}
\setstretch{0.8}
    \begin{small}
        \caption{Histogram-based-Distance} \label{alg:histogram_similarity}
        \KwIn{$H_S$, $H_T$}
        \KwOut{$dist_h$}
        \BlankLine
        $dist_h \leftarrow 0, P_S \leftarrow 0, P_T \leftarrow 0$\;
        \For { $i \in [0, m - 1]$} {
            $dist_h \leftarrow \max\{H_S[i]+P_S - P_T, H_T[i]+P_T - P_S, dist_h\}$\;
            $P_S \leftarrow P_S + H_S[i],  P_T \leftarrow P_T +  H_T[i]$\;
        }
        \Return $dist_h$\;
    \end{small}
\end{algorithm}

Using histograms to discretize CDFs reduces the  similarity computation time to $O(\log |\mathcal{D}_T| + m)$, i.e., $O(\log |\mathcal{D}_T|)$ time for $H_T$ computation and $O(m)$ time for Algorithm~\ref{alg:histogram_similarity}. Histogram $H_S$ is pre-computed since $\mathcal{D}_S$ is known. Its cost is omitted here. 

\textbf{Synthetic dataset generation.} We aim to generate a small number of 
datasets with CDFs that can be similar to those of a large number of real datasets, as bounded by threshold $\epsilon$. 

We first generate a set of CDFs to cover the space of possible CDFs. We discretize the CDF space, such that it can be covered by limited CDFs given threshold $\epsilon$. As shown in Fig.~\ref{fig:sync_cdfs}, after data normalization, all CDFs lie in a $[0, 1]^2$ space. Any CDF can be seen as a curve that starts at $(0, 0)$ and travels to $(1, 1)$ in a non-deceasing manner (in the CDF value dimension). We discrete this space with a grid, where each row has a height of $1-\epsilon$ ($\epsilon = 0.8$ in the figure), and each column has a width of $\lceil 1/(1-\epsilon) \rceil$. Consider the set $\mathcal{L}$ of polylines each starting from $(0, 0)$ and traveling to $(1, 1)$ via the grid vertices in a non-deceasing manner (in the CDF value dimension, e.g., the colored lines). Straightforwardly. given any CDF, there must be a polyline $l \in \mathcal{L}$ such that the distance  between $l$ and the CDF is bounded by $1-\epsilon$ (cf. Fig.~\ref{fig:sync_real_cdfs}).   

\vspace{-1mm}
\vspace{-0.1in}
\begin{figure}[h]
    \centering
    \subfloat[CDFs of synthetic data~\label{fig:sync_cdfs}]{
    \hspace{-3mm}
    	\includegraphics[width=0.245\textwidth]{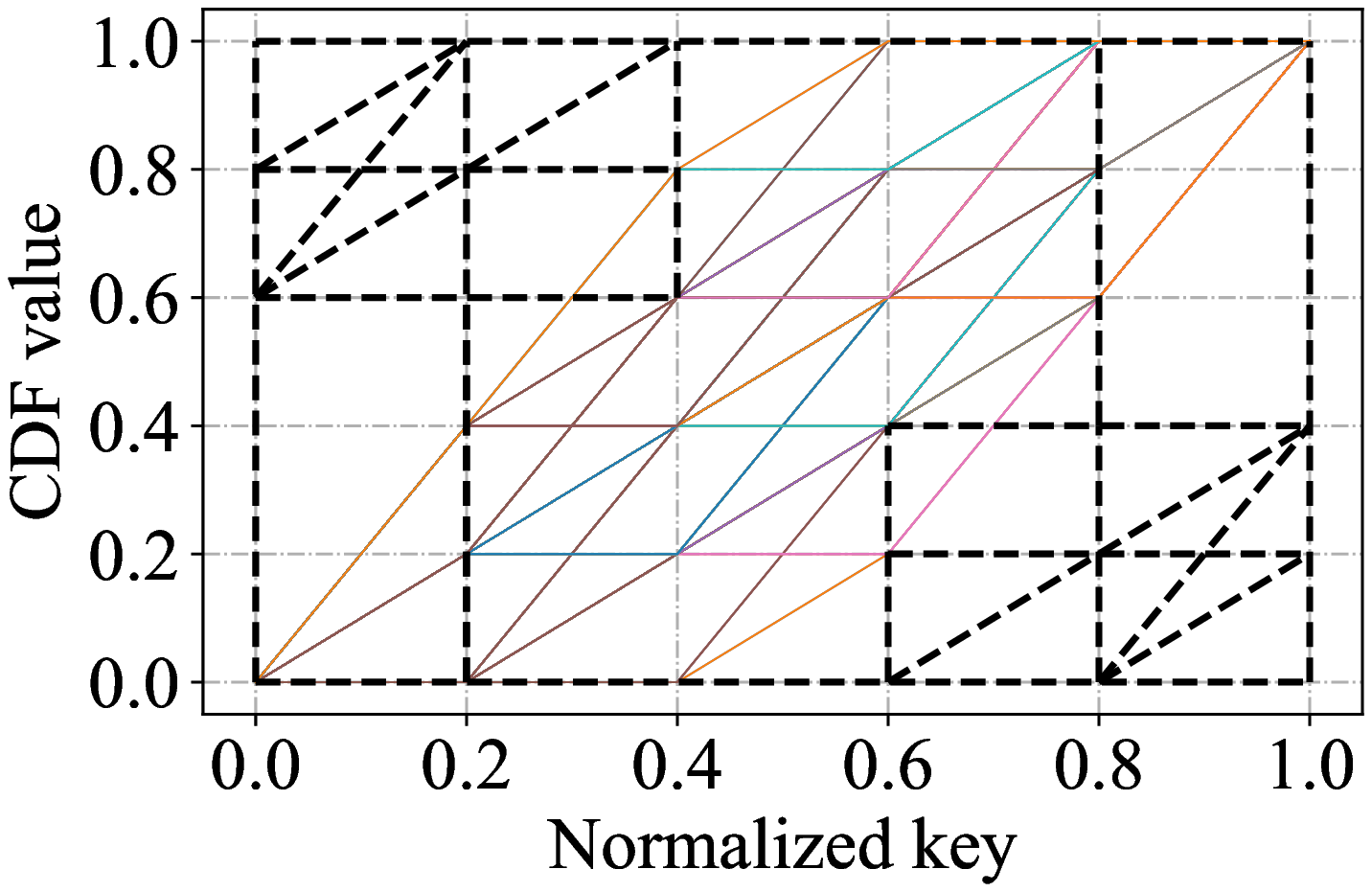}
    }
    \subfloat[CDFs of real and synthetic data~\label{fig:sync_real_cdfs}]{
    \hspace{-3mm}
    	\includegraphics[width=0.245\textwidth]{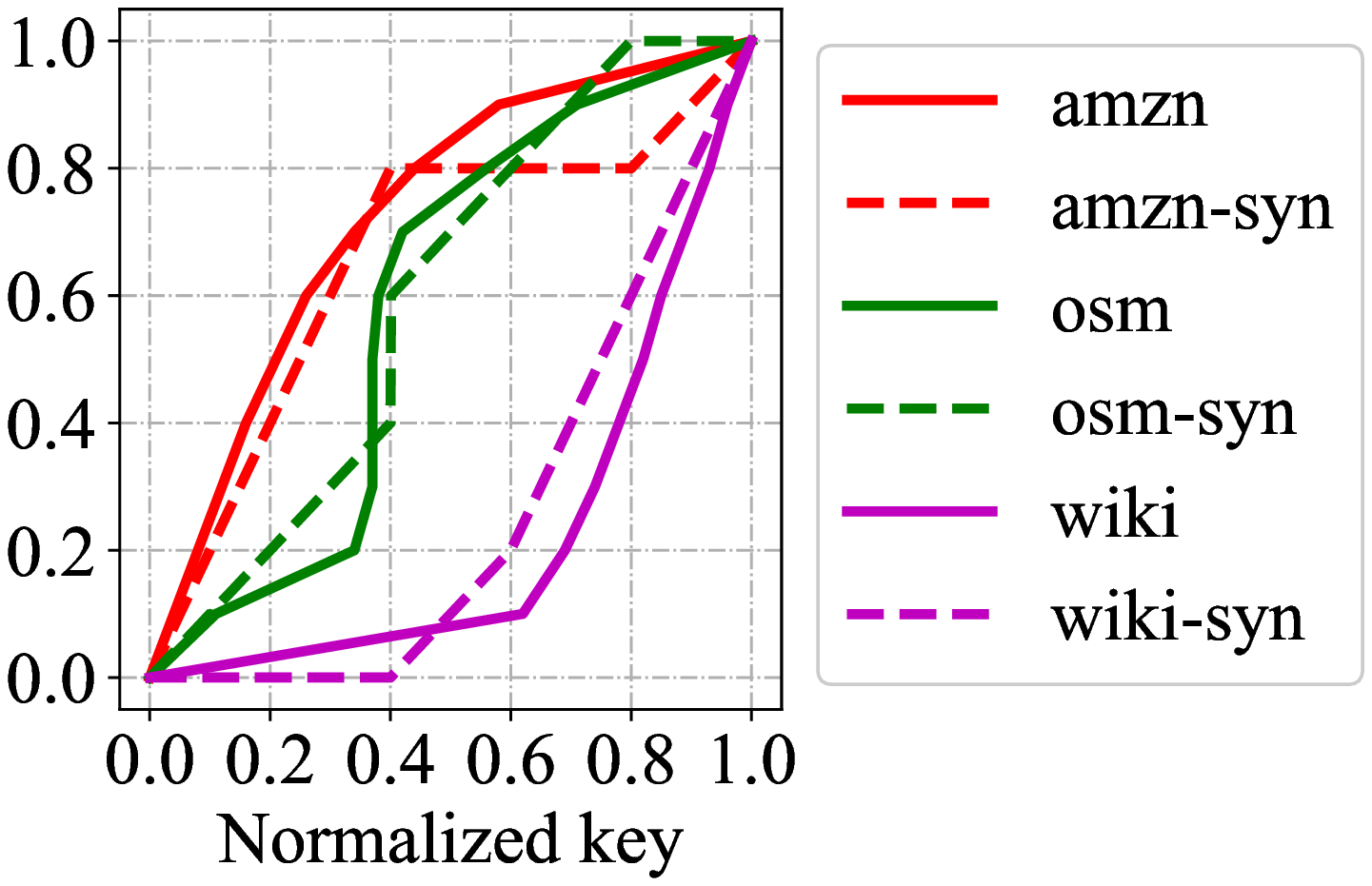}
    }
    \caption{CDF space discretization.}\label{fig:sync_data_sets}
    \vspace{1mm}
\end{figure}

The CDFs in $\mathcal{L}$ correspond to histograms with $m= \lceil 1/(1-\epsilon) \rceil$ where each bin has a probability value 
in $\{0, 1-\epsilon, 2(1-\epsilon), \ldots, 1\}$. To limit the bin value combinations and hence the number of CDFs (synthetic datasets) generated, we limit the probability value of each bin to be within $\{0, (1-\epsilon)/2,  1-\epsilon\}$, and we use $m = \lceil 2/(1-\epsilon) \rceil$ bins in the histogram heuristically. Our synthetic datasets hence will not cover the most skewed CDFs (e.g., the black polylines in Fig.~\ref{fig:sync_cdfs}). However, when a target dataset $\mathcal{D}_T$ is matched by a synthetic dataset, their CDF similarity  may be within $(1-\epsilon)/2$ rather than $1-\epsilon$, which improves the query performance. Our total number of histograms generated is: 
$\sum_{i=0}^{m} (C_{m}^{i}\cdot C_{m-i}^{\left \lfloor
(1-i(1-\epsilon))/((1-\epsilon) /2)
\right \rfloor})$, where the two combination terms represent the numbers of bins with probability values of $(1-\epsilon)$ and $(1-\epsilon)/2$, respectively. Once a histogram is generated, we generate a synthetic dataset of $ns$ key values ($ns=100$ in our experiments) based on the histogram, where the data range is $[0,1]$, and random key values are generated for each bin. 
This procedure is shown the be effective and efficient empirically.

\textbf{Model adaptation.} 
When a model $\mathcal{M}_S$ pre-trained on 
$\mathcal{D}_S$ has been selected to index $\mathcal{D}_T$, we need to adapt $\mathcal{M}_S$ based on the data domains of $\mathcal{D}_S$ and $\mathcal{D}_T$. This is because $\mathcal{M}_S$ will not work properly on a domain over which it was not trained, even if the CDFs of $\mathcal{D}_S$ and $\mathcal{D}_T$ share a similar shape. 
Let the data ranges of $\mathcal{D}_S$ and $\mathcal{D}_T$ be 
$[x_S^s, x_S^e]$ and $[x_T^s, x_T^e]$, and their data storage position ranges be $[y_S^s, y_S^e]$ and $[y_T^s, y_T^e]$, respectively. Model $\mathcal{M}_S$ has been trained to take a search key in $[x_S^s, x_S^e]$ as the input and predict an storage position in $[y_S^s, y_S^e]$. 
Here, we assume that $\mathcal{M}_S$ predicts the storage position of point $p$ directly rather than its rank (or percentile), i.e., $p.addr \approx \mathcal{M}_S(p.key)$ (instead of $\mathcal{M}_S(p.key)\cdot|\mathcal{D}_S|$ as shown in Section~\ref{sec:intro}). This simplifies the discussion but does not impact our key findings. 
To adapt $\mathcal{M}_S$ for $\mathcal{D}_T$, we take a search key in $[x_T^s, x_T^e]$, map it into $[x_S^s, x_S^e]$, and feed the mapped value into $\mathcal{M}_S$ for prediction. The predicted output needs to be mapped back into $[y_T^s, y_T^e]$ for $\mathcal{D}_T$. 

Let 
$S_{\Delta x} = \frac{x_S^e - x_S^s}{x_T^e - x_T^s}$ and $S_{\Delta y} = \frac{y_T^e - y_T^s}{y_S^e - y_S^s}$. 
The input mapping is done by a linear transformation  $\mathcal{T}_{in}(x) = a_1\cdot x + b_1$ where $a_1=S_{\Delta x}$ and $b_1=x_S^s -  x_T^s \cdot S_{\Delta x}$. This is an affine transformation that  maps the data range (i.e., $\mathcal{T}_{in}(x_T^s) = x_S^s$ and $\mathcal{T}_{in}(x_T^e) = x_S^e$) 
without changing the distribution. 
Similarly, the output mapping is done by  
$\mathcal{T}_{out}(y) = a_2\cdot y + b_2$ where $a_2=S_{\Delta y}$ and $b_2=y_T^s -  y_S^s \cdot S_{\Delta y}$. 
The mappings may incur extra costs (floating point calculation), which can be mitigated by adjusting the parameters of  $\mathcal{M}_S$. We use a linear model as an example.

\begin{lemma}\label{lemma:da_no_extra_cost}
Input and output mappings for a linear model $\mathcal{M}_S$ incur no additional prediction costs. 
\end{lemma}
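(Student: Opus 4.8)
The plan is to exploit the fact that the composition of affine maps is again affine, so the two range mappings can be absorbed into the parameters of $\mathcal{M}_S$ once, at adaptation time, leaving the per-query prediction identical in form (and hence in cost) to that of the original linear model.

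First I would write the linear model explicitly, say $\mathcal{M}_S(x) = w\,x + c$ for trained parameters $w$ and $c$. The full reused prediction on a key $x \in [x_T^s, x_T^e]$ is then the triple composition $\mathcal{T}_{out}\!\big(\mathcal{M}_S(\mathcal{T}_{in}(x))\big)$, where $\mathcal{T}_{in}(x)=a_1 x + b_1$ and $\mathcal{T}_{out}(y)=a_2 y + b_2$ are the affine input and output mappings defined above.

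Next I would substitute these definitions and expand the composition, collecting terms in $x$. This yields a single linear expression $w' x + c'$ with $w' = a_1 a_2 w$ and $c' = a_2(w b_1 + c) + b_2$. The key observation is that $w'$ and $c'$ depend only on the fixed quantities $a_1, b_1, a_2, b_2, w, c$, all of which are determined once the data and position ranges of $\mathcal{D}_S$ and $\mathcal{D}_T$ are fixed. Hence they can be computed a single time during adaptation and stored in place of the original $w$ and $c$.

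Finally I would conclude that evaluating the adapted model on any query reduces to computing $w' x + c'$, i.e.\ one multiplication and one addition, which is exactly the cost of evaluating the original $\mathcal{M}_S$; the mapping overhead has been shifted entirely into a one-time precomputation and contributes nothing per prediction. I do not anticipate a genuine obstacle here, as the argument is a direct computation exploiting closure of affine maps under composition; the only point requiring care is the accounting distinction between the one-time adaptation cost (folding the maps into $w'$ and $c'$) and the per-query prediction cost, and making explicit that the lemma concerns the latter.
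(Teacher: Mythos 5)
Your proposal is correct and follows essentially the same route as the paper's proof: expand the composition $\mathcal{T}_{out}\circ\mathcal{M}_S\circ\mathcal{T}_{in}$, observe it is again affine, and fold the resulting slope and intercept into new parameters $a'=a\cdot S_{\Delta x}\cdot S_{\Delta y}$ and $b'$ computed once at adaptation time. Your explicit remark distinguishing the one-time adaptation cost from the per-query cost is a nice clarification but does not change the argument.
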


\vspace{-0.1in}
\begin{proof}
Let $\mathcal{M}_S$ be a linear model $y = ax + b$, where $a$ and $b$ are parameters. The output $\tilde{y'}$ of  $\mathcal{M}_S$ is (with input mapping):  
\vspace{-0.05in}
\begin{equation}
\begin{small}
\begin{aligned}
\tilde{y'}
& = \mathcal{M}_S\big(\mathcal{T}_{in}(x)\big) = a\cdot \mathcal{T}_{in}(x) + b\\
& = a\cdot S_{\Delta x}\cdot x - a\cdot x_T^s \cdot S_{\Delta x} + a\cdot x_S^s + b
\end{aligned}
\end{small}
\end{equation}
\vspace{-0.05in}
After output mapping, the final prediction output $\tilde{y}$ is:  
\begin{equation}
\begin{small}
\begin{aligned}
\tilde{y}
& = \mathcal{T}_{out}(\tilde{y'}) = (\tilde{y'} - y_S^s)\cdot S_{\Delta y} + y_T^s = a\cdot S_{\Delta x}\cdot S_{\Delta y} \cdot x \\
& + (- a\cdot x_T^s\cdot S_{\Delta x}+ a\cdot x_S^s + b - y_S^s)\cdot S_{\Delta y} + y_T^s
\end{aligned}
\end{small}
\vspace{-0.05in}
\end{equation}
Thus, we can adapt $\mathcal{M}_S$ to a new linear model $y = a'x + b'$ for $\mathcal{D}_T$ where $a' =a\cdot S_{\Delta x}\cdot S_{\Delta y}$ and  $b' =(- a\cdot x_T^s\cdot S_{\Delta x}+ a\cdot x_S^s + b - y_S^s)\cdot S_{\Delta y} + y_T^s$. Input and output mappings can be combined with linear models without extra  prediction costs. 
\end{proof}
\vspace{-1mm}

Similar results can be derived for other models (e.g., neural models). We omit the details due to the space limit.

\textbf{Error bounding.}
Recall the prediction errors of $\mathcal{M}_S$ over $\mathcal{D}_S$, $err_{l}$ and $err_{u}$.  Given a query key $x \in \mathcal{D}_S$, the position of $x$ is bounded in $[\mathcal{M}_S(x) + err_{l}, \mathcal{M}_S(x) + err_{u}]$. 
After input and output mappings, we also need to adjust the error bounds for  $\mathcal{D}_T$. 

\vspace{-1mm}
\begin{theorem}\label{theorem:error_bounds} Let $\mathcal{M}_S$ be a  model trained on $\mathcal{D}_S$ with prediction error bounds  $err_{l}$ and $err_{u}$. 
Let $dist$ be the distance between $\mathcal{D}_S$ and $\mathcal{D}_T$.
The error bounds of $\mathcal{M}_S$ over $\mathcal{D}_T$ of size $n_T$ are:
\vspace{-0.05in}
\begin{eqnarray}
\small
err_{l}' & = & -dist\cdot n_T + err_{l}\cdot S_{\Delta y}\\
err_{u}' & = & dist\cdot n_T + err_{u}\cdot S_{\Delta y}
\vspace{-0.in}
\end{eqnarray}
\end{theorem}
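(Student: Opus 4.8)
The plan is to bound the discrepancy between the adapted prediction $\tilde{y}(x) = \mathcal{T}_{out}\big(\mathcal{M}_S(\mathcal{T}_{in}(x))\big)$ and the true position of a query key $x$ in $\mathcal{D}_T$, which I denote $pos_T(x)$. I would split the total error into two independent sources: the prediction error that $\mathcal{M}_S$ already carries on $\mathcal{D}_S$ (transported into the position space of $\mathcal{D}_T$ by the output mapping), and the error caused purely by the distributional gap between $\mathcal{D}_S$ and $\mathcal{D}_T$. Concretely, writing $x' = \mathcal{T}_{in}(x)$ and letting $pos_S(x')$ be the true position of $x'$ in $\mathcal{D}_S$, I would insert the intermediate quantity $\mathcal{T}_{out}(pos_S(x'))$ and telescope $pos_T(x) - \tilde{y}(x)$ into $\big[pos_T(x) - \mathcal{T}_{out}(pos_S(x'))\big] + \big[\mathcal{T}_{out}(pos_S(x')) - \mathcal{T}_{out}(\mathcal{M}_S(x'))\big]$.

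For the second bracket (the transported model error), I would use that $\mathcal{T}_{out}$ is affine with slope $S_{\Delta y}$, so this term equals $S_{\Delta y}\cdot\big(pos_S(x') - \mathcal{M}_S(x')\big)$. Since $\mathcal{M}_S$ has bounds $err_{l}$ and $err_{u}$ on $\mathcal{D}_S$, we have $pos_S(x') - \mathcal{M}_S(x') \in [err_{l}, err_{u}]$, and hence this term lies in $[S_{\Delta y}\cdot err_{l},\, S_{\Delta y}\cdot err_{u}]$ (using $S_{\Delta y} > 0$). This yields exactly the $err_{l}\cdot S_{\Delta y}$ and $err_{u}\cdot S_{\Delta y}$ contributions in the claimed bounds.

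For the first bracket (the distribution mismatch), I would express both true positions through CDFs over the common normalized domain $[0,1]$. The affine input map $\mathcal{T}_{in}$ sends a key at normalized position $u$ in $\mathcal{D}_T$ to the key at the same normalized position in $\mathcal{D}_S$, so $pos_S(x') = cdf_S(u)\cdot|\mathcal{D}_S|$ and $pos_T(x) = cdf_T(u)\cdot n_T$. After applying $\mathcal{T}_{out}$, which rescales the $\mathcal{D}_S$ position range into the $\mathcal{D}_T$ position range, $\mathcal{T}_{out}(pos_S(x'))$ collapses to $cdf_S(u)\cdot n_T$, so the first bracket equals $n_T\cdot\big(cdf_T(u) - cdf_S(u)\big)$. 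Bounding the pointwise CDF gap by the supremum $dist = \sup_x|cdf_S(x)-cdf_T(x)|$ places this term in $[-dist\cdot n_T,\ dist\cdot n_T]$. Summing the two intervals gives $err_{l}' = -dist\cdot n_T + err_{l}\cdot S_{\Delta y}$ and $err_{u}' = dist\cdot n_T + err_{u}\cdot S_{\Delta y}$.

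The main obstacle is the distribution-mismatch bracket: it requires carefully tying the position predictions to the CDFs after normalization, verifying that $\mathcal{T}_{in}$ and $\mathcal{T}_{out}$ align the domains and position ranges so that $\mathcal{T}_{out}\big(pos_S(\mathcal{T}_{in}(x))\big)$ reduces precisely to $cdf_S(u)\cdot n_T$, and then passing from the pointwise gap to the sup-based $dist$. A secondary subtlety is that a queried key $x$ (and its image $x'$) need not be an actual stored point of $\mathcal{D}_S$, so I would rely on the CDFs being defined for all $x$ and on the bounds $[err_{l}, err_{u}]$ holding uniformly over the prediction range rather than only at stored keys.
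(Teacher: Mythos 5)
Your proposal is correct and follows essentially the same route as the paper's proof: both decompose the adapted prediction's error into the model error transported by the affine output map (giving the $err_{l}\cdot S_{\Delta y}$ and $err_{u}\cdot S_{\Delta y}$ terms) plus the distribution-mismatch term bounded by $dist\cdot n_T$. If anything, your CDF-based justification of $\big|\mathcal{T}_{out}(pos_S(x')) - pos_T(x)\big| \leq dist\cdot n_T$ is more explicit than the paper, which simply asserts this inequality after the input and output mappings.
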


\begin{proof}
For any $x \in \mathcal{D}_T$ 
and any $\mathcal{T}_{in}(x) \in \mathcal{D}_S$
, let $y$ and $y'$ be the storage positions of the corresponding data points, respectively. Then, $\mathcal{M}_S\big(\mathcal{T}_{in}(x)\big)$ is bounded by:
\vspace{-0.05in}
\begin{equation}\label{equation:min_err_3}
\small
 y' - err_{l} \geq \mathcal{M}_S\big(\mathcal{T}_{in}(x)\big)
 \vspace{-0.05in}
\end{equation}
After output mapping,  $\mathcal{T}_{out}\big(\mathcal{M}_S\big(\mathcal{T}_{in}(x)\big)\big)$ is the predicted position of $\mathcal{M}_S$ over $\mathcal{D}_T$, which 
is bounded by:
\vspace{-0.05in}
\begin{equation}\label{equation:min_err_1}
\small
y - err_{l}'\geq \mathcal{T}_{out}\Big(\mathcal{M}_S\big(\mathcal{T}_{in}(x)\big)\Big)
\end{equation}
\vspace{-0.05in}
Since mappings are monotonic, Inequality~(\ref{equation:min_err_3}) is  rewritten as:
\vspace{-0.05in}
\begin{equation}\label{equation:min_err_4}
\begin{small}
\begin{array}{l}
\mathcal{T}_{out}(y' - err_{l}) 
= \mathcal{T}_{out}(y') - \mathcal{T}_{out}(err_{l}) + b'
\geq \mathcal{T}_{out}\Big(\mathcal{M}_S\big(\mathcal{T}_{in}(x)\big)\Big)
\end{array}
\end{small}
\end{equation}
where $b'$ is the interception of $\mathcal{T}_{out}(\cdot)$.
Given $dist$ as the distance between 
$\mathcal{D}_T$ and $\mathcal{D}_S$, 
after input and output mappings, we  have 
$|\mathcal{T}_{out}(y') - y| \leq dist\cdot n_T$, i.e., $y + dist\cdot n_T \geq \mathcal{T}_{out}(y')$.
Combining with Inequality~(\ref{equation:min_err_4}), we have 
$y + dist\cdot n_T - \mathcal{T}_{out}(err_{l}) + b'\geq \mathcal{T}_{out}\Big(\mathcal{M}_S\big(\mathcal{T}_{in}(x)\big)\Big)$. 
Given this inequality, to satisfy Inequality~(\ref{equation:min_err_1}), we enforce  $y  - err_{l}' \geq y + dist\cdot n_T - \mathcal{T}_{out}(err_{l}) + b' $.
Thus, $err_{l}' \leq -dist\cdot n_T + err_{l}\cdot S_{\Delta y}$, where $ S_{\Delta y}$ is the slope of $\mathcal{T}_{out}(\cdot)$.
Letting $err_{l}'=-dist\cdot n_T + err_{l}\cdot  S_{\Delta y}$ will ensure query correctness.
Similarly, we can derive 
$err_{u}' = dist\cdot n_T + err_{u}\cdot S_{\Delta y}$.
\end{proof}

\vspace{-0.1in}
  \begin{figure}[htp]
      \centering
      \includegraphics[width=0.45\textwidth]{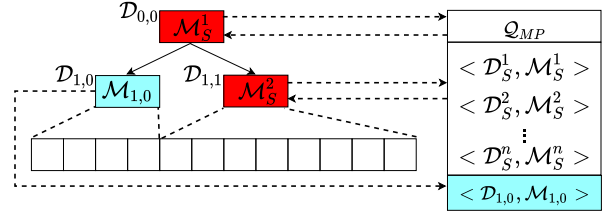}
      \caption{Building RMI with agile model reuse  
        ($\epsilon = 0.8$).
      }\label{fig:model_reuse}
      \vspace{1mm}
  \end{figure}
  
\textbf{Learned indices with agile model reuse.} 
We showcase the applicability of agile model reuse over existing learned indices by building a two-layer RMI~\cite{RMI}, as shown in Fig.~\ref{fig:model_reuse}. We first compute $sim(\mathcal{D}_S^{1}, \mathcal{D}_{0,0})$ between the full dataset $\mathcal{D}_{0,0}$ to be indexed and a synthetic dataset $\mathcal{D}_S^{1}$, which has a pre-trained model $\mathcal{M}_S^{1}$ in $\mathcal{Q}_{MP}$. Suppose  $sim(\mathcal{D}_S^{1}, \mathcal{D}_{0,0})=0.9 \ge \epsilon=0.8$. Then, $\mathcal{M}_S^{1}$ is reused over $\mathcal{D}_{0,0}$. 
For $\mathcal{D}_{1,0}$, a subset to be indexed by a child model in RMI, 
we cannot find a synthetic dataset with a similarity greater than or equal to 0.8. We train a model $\mathcal{M}_{1,0}$ over $\mathcal{D}_{1,0}$ and put it into $\mathcal{Q}_{MP}$ for reuse later. 
For the other subset $\mathcal{D}_{1,1}$, we find another synthetic dataset $\mathcal{D}_S^{2}$ with   
$sim(\mathcal{D}_S^{2},\mathcal{D}_{1,1})=0.85 \ge 0.8$. Its model $\mathcal{M}_S^{2}$ is reused over $\mathcal{D}_{1,1}$, which completes the RMI.

\textbf{Recursive Model Reuse Tree.}
In RMI, 
the number of layers and 
the number of models in each layer is fixed. 
If the data is skewed, the cardinality of the subsets assigned to different models can vary considerably, resulting in high prediction errors and search costs on some models. To address this issue, we design a learned index structure with built-in agile model reuse support named the \emph{recursive model reuse tree} (RMRT).

 Suppose that 
 the models used in RMRT have the same learning capacity (e.g., neural networks of the same structure), which can fit at most $N$ points each.
 When $|\mathcal{D}_T|$ 
is greater than $N$, we first learn a model $\mathcal{M}_{0, 0}$ to predict the points into $B$ partitions where $B$ is a system parameter. Then, recursively, for points predicted to partition $i$, we learn another model $\mathcal{M}_{1, i}$ to partition them. This process continues, until each  partition has at most $N$ points, which is indexed by a learned model. Agile model reuse is applied whenever a model is needed in this process. 
Fig.~\ref{fig:rmrt} gives an example with $N=4$ and $B=2$. Model $\mathcal{M}_{0,0}$ predicts two partitions (i.e., subsets) $\mathcal{D}_{1,0}$ and $\mathcal{D}_{1,1}$ that contain the first four and the last eight points, respectively. 
Further partitioning is needed for $\mathcal{D}_{1,1}$. A model 
$\mathcal{M}_{1,1}$ is learned for this, creating two partitions of 
size $N$ each. The partitioning then stops.

\vspace{-1mm}
\vspace{-0.1in}
\begin{figure}[h]
    \centering
    \subfloat[RMRT ($N$=4, $B$=2)~\label{fig:rmrt}]{

    	\includegraphics[width=0.22\textwidth]{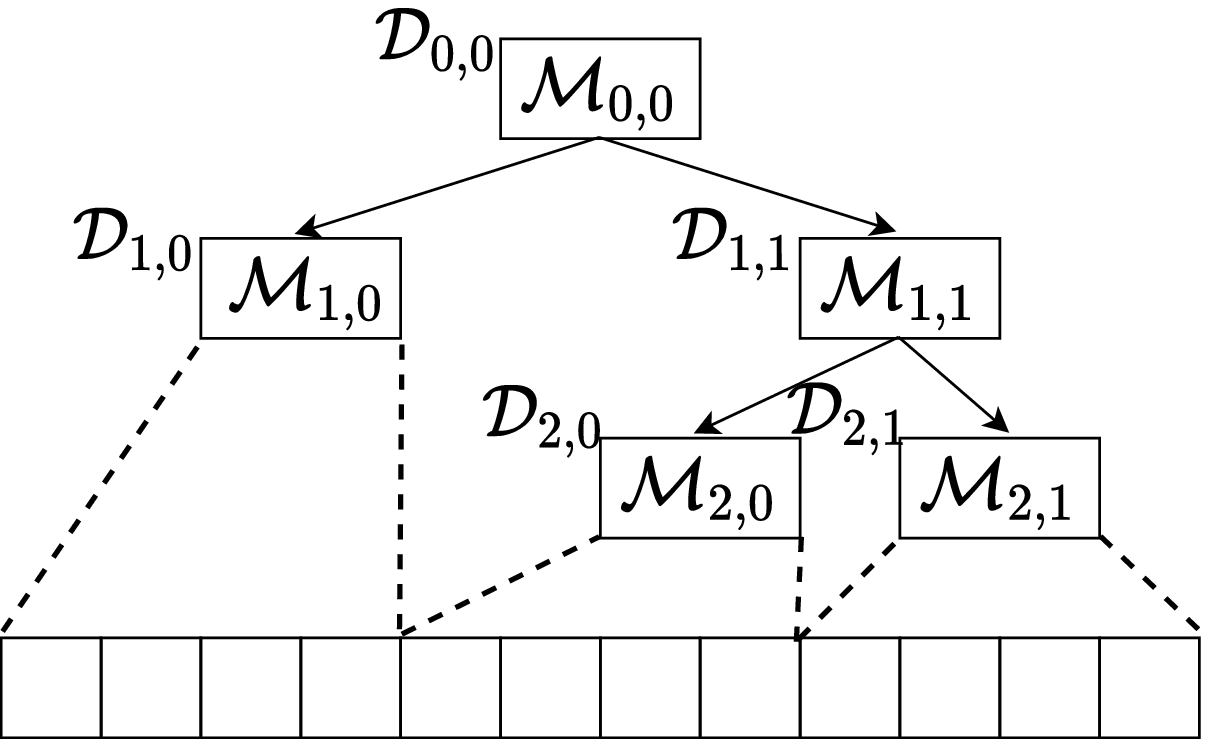}
    }
    \subfloat[Insertion handling~\label{fig:update}]{

    	\includegraphics[width=0.24\textwidth]{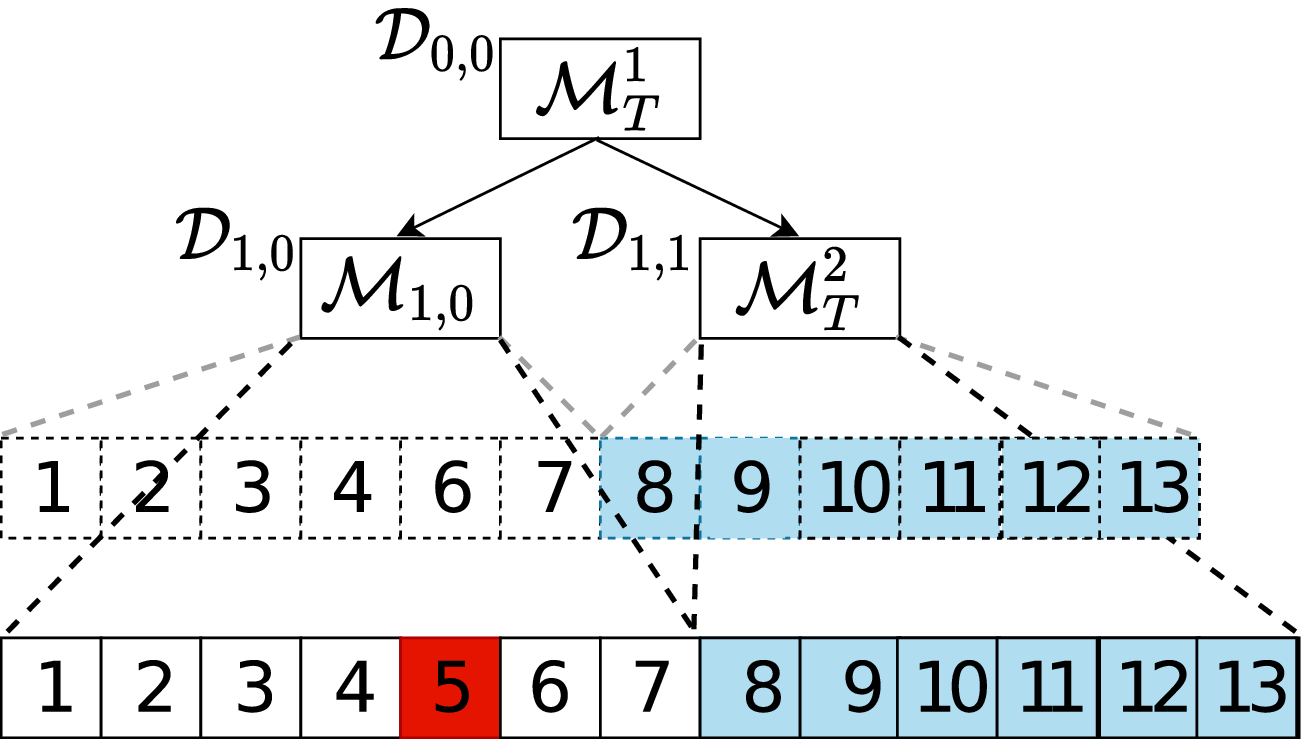}
    }
    \caption{RMRT and insertion handling examples}
    \vspace{1mm}
\end{figure}

\vspace{1mm}
\vspace{-0.1in}
\section{Update Handling}
\vspace{-0.5mm}
We focus on insertions. Deletions can be implemented simply as a point query and marking the queried point as ``deleted''.

To handle insertions, we examine their impact on the CDF. As shown in Fig.~\ref{fig:update}, when data point 5 is inserted, only the CDF of $\mathcal{D}_{1,0}$ is impacted in the second layer of the recursive model.  
For $\mathcal{D}_{1,1}$ which is not impacted, we can simply add 1 to its model error bounds.  
For $\mathcal{D}_{1,0}$, we have to check whether the reused model $\mathcal{M}_{1,0}$ can still meet the similarity bound defined by threshold $\epsilon$. To enable efficient checks, we propose a bound on the maximum number of insertions without requiring model updates. 
\begin{lemma}\label{lemma:insert_bound}
Let $\mathcal{D}$ be a dataset with cardinality $n_{\mathcal{D}}$ and $\mathcal{M}_D$ be a  model over $\mathcal{D}$.  Let $sim$ be the similarity between $\mathcal{D}$ and the  dataset from which $\mathcal{M}_D$ is trained, which can be $\mathcal{D}$ or a synthetic dataset.  
If there are less than $\frac{(sim - \epsilon)}{1+ \epsilon - sim} \cdot n_{\mathcal{D}}$ insertions on $\mathcal{D}$, we can still reuse model $\mathcal{M}_D$ for the resultant dataset $\mathcal{D'}$. 
\vspace{-0.5mm}
\end{lemma}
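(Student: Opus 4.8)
The plan is to cast the reuse condition directly in terms of the Kolmogorov--Smirnov (sup-norm) distance on CDFs and then control how much a batch of insertions can move the CDF. Let $\mathcal{D}_0$ denote the dataset from which $\mathcal{M}_D$ was trained, so that by Definition~\ref{def:similarity} we have $sim = 1 - \sup_x|cdf_{\mathcal{D}_0}(x) - cdf_{\mathcal{D}}(x)|$. Reusing $\mathcal{M}_D$ for $\mathcal{D}'$ is permitted exactly when $sim(\mathcal{D}_0, \mathcal{D}') \ge \epsilon$, i.e. when $\sup_x|cdf_{\mathcal{D}_0}(x) - cdf_{\mathcal{D}'}(x)| \le 1 - \epsilon$. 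Because the sup-norm distance between CDFs is a metric and hence obeys the triangle inequality, I would bound this quantity by $(1 - sim) + \sup_x|cdf_{\mathcal{D}}(x) - cdf_{\mathcal{D}'}(x)|$, reducing the problem to bounding the CDF shift caused by the insertions alone.

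The core step is therefore to bound $\sup_x|cdf_{\mathcal{D}}(x) - cdf_{\mathcal{D}'}(x)|$ when $k$ points are inserted into $\mathcal{D}$. First I would write both CDFs as count ratios: letting $c(x) = |\{p \in \mathcal{D}: p \le x\}|$ and $j(x)$ be the number of inserted keys at most $x$ (so $0 \le j(x) \le k$ and $0 \le c(x) \le n_{\mathcal{D}}$), we have $cdf_{\mathcal{D}}(x) = c(x)/n_{\mathcal{D}}$ and $cdf_{\mathcal{D}'}(x) = (c(x)+j(x))/(n_{\mathcal{D}}+k)$. A short calculation gives
\[
cdf_{\mathcal{D}'}(x) - cdf_{\mathcal{D}}(x) = \frac{n_{\mathcal{D}}\,j(x) - c(x)\,k}{n_{\mathcal{D}}(n_{\mathcal{D}}+k)},
\]
and over the admissible ranges of $j(x)$ and $c(x)$ the numerator lies in $[-n_{\mathcal{D}}k,\; n_{\mathcal{D}}k]$. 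Hence $\sup_x|cdf_{\mathcal{D}}(x) - cdf_{\mathcal{D}'}(x)| \le k/(n_{\mathcal{D}}+k)$, a bound that holds regardless of where the new keys fall.

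Combining the two estimates, reuse is guaranteed as soon as $(1 - sim) + \frac{k}{n_{\mathcal{D}}+k} \le 1 - \epsilon$, that is, $\frac{k}{n_{\mathcal{D}}+k} \le sim - \epsilon$. Solving this for $k$ (clearing the denominator and collecting the $k$ terms) yields $k \le \frac{sim - \epsilon}{1 + \epsilon - sim}\,n_{\mathcal{D}}$, precisely the stated threshold. I expect the main obstacle to lie in the CDF-shift bound rather than the final algebra: one must check that the extremal numerator values $\pm n_{\mathcal{D}}k$ are not exceeded for an \emph{arbitrary} arrangement of inserted keys, and confirm that the triangle-inequality step does not degrade the bound so much that it becomes vacuous when $sim$ is only slightly larger than $\epsilon$. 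It is also worth emphasizing that the lemma gives a \emph{sufficient} worst-case condition, so the strict ``less than'' is exactly what keeps the resulting distance within $1 - \epsilon$ no matter how the inserted points are distributed.
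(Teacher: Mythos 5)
Your proposal is correct and follows essentially the same route as the paper's proof: bound the CDF drift caused by $k$ insertions by $k/(n_{\mathcal{D}}+k)$, absorb it into the slack $sim-\epsilon$ via the triangle inequality on the sup-norm CDF distance, and solve for $k$. Your count-ratio computation actually makes rigorous the paper's informal claim that the worst case occurs when all insertions land at one position, so your write-up is, if anything, tighter than the original.
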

\begin{proof}
After $n_i$ insertions, the CDFs $cdf_D(x)$ and $cdf_{D'}(x)$ may become different. In the worst case, all new data points are inserted at the same position, where the difference between $cdf_D(x)$ and $cdf_{D'}(x)$ is bounded by $dist(\mathcal{D}, \mathcal{D}') \leq \frac{n_i}{n_i + n_{\mathcal{D}}}$. Recall that  $\mathcal{M}_D$ is reused with $sim \geq \epsilon$ or trained on $\mathcal{D}$ ($sim=1\geq \epsilon$). We can use the gap $sim- \epsilon$ as a buffer to accommodate the CDF drift caused by the insertions. 
According to the transitivity of inequalities, if $\frac{n_i}{n_i + n_{\mathcal{D}}} \leq sim - \epsilon$, there is no need to rebuild a model over $\mathcal{D}'$, since the impact on the CDF of $\mathcal{D}$ by the insertions cannot exceed the error bound $sim - \epsilon$. 
Thus, we can derive a bound on the number of insertions $n_i$ as $n_i \leq \frac{(sim - \epsilon)}{1+ \epsilon - sim} \cdot n_{\mathcal{D}}$\end{proof}

According to Lemma~\ref{lemma:insert_bound},  insertions can be handled without model rebuilt when their number does not exceed the bound. 
When a new data point is inserted, we find the target insertion position through a point query and obtain the corresponding model $\mathcal{M}$.  If the number of insertions on  $\mathcal{M}$ has not exceeded the bound, the insertion is completed. Otherwise, we only rebuild the model indexing the inserted data point.


\vspace{1mm}
\vspace{-0.1in}
\section{Experiments}\label{sec:experiments}
\vspace{-0.5mm}
All experiments are done on a 64-bit machine with a 3.60 GHz Intel i9 CPU, RTX 2080Ti GPU, 64~GB RAM, and a 1 TB hard disk. We use \emph{PyTorch}~1.4~\cite{pytorch} and its C++ APIs to implement the learned indices based on GPU. The linear and neural network models are implemented using \emph{Scikit-learn}~\cite{sklearn} and \textit{PyTorch}, respectively.

\textbf{Competitors.} We compare with both traditional and learned indices: 1) traditional \textbf{BTree}~\cite{btree} which is a C++ based in-memory B+ tree from the STX B+ Tree package; 
2) \textbf{RMI}~\cite{RMI} which is the linear RMI model from the SOSD benchmark~\cite{marcus2020benchmarking};
3) \textbf{RMI-NN} which is our implementation of the neural network RMI model;
4) \textbf{PGM}~\cite{pgm} which is a piecewise geometric model index; and 5)~\textbf{RS~}~\cite{RadixSpline} which is a single-pass learned index.


\textbf{Proposed models.} We study the performance of the following proposed and adapted models:
 1) \textbf{RMI-MR} which is the linear RMI model enhanced by agile model reuse; 2) \textbf{RMI-NN-MR} which is the neural RMI model enhanced by agile model reuse; and 3) \textbf{RMRT} which is our proposed learned index.

\textbf{Implementation details.}
For BTree, RMI, PGM, and RS, we use their published source code and default configurations. For RMI-MR, we adapt the original model training code to include agile model reuse.
For neural network based models including RMI-NN, RMI-NN-MR and RMRT, we use feedforward neural networks each with one hidden layer of four neurons.
We use $N=10^{6}$ as the RMRT model size threshold, which shows strong empirical performance. We set the default value for $\epsilon$ to 0.9.

We summarize the number of synthetic datasets (each with 100 points) and the time to pre-train models on them in
Table~\ref{tab:synthetic_gen_pretrain_epsilon}. 
Note that we use $m=12 < \lceil 2/(1-\epsilon) \rceil$ when $\epsilon=0.9$. As a result, the number of datasets generated is bounded in 10,000; all pre-trained models can be loaded in memory
within a second (30 MB in size); and the total  model comparison time to build an index  in any of the experiments is also within a second. 

\vspace{1mm}
 \begin{table}[h]
 \renewcommand\arraystretch{0.5} 
    \centering
        \small
        \caption{Summary of Synthetic Datasets.}
        \label{tab:synthetic_gen_pretrain_epsilon}
    \begin{tabular}{l|l|l|l|l|l}
    \toprule   
        $\epsilon$ &  0.5 & 0.6 & 0.7 & 0.8 & 0.9 \\
         \midrule
         \midrule
        Number of bins ($m$) & 4 & 5 & 7 & 10 & 12\\
        \midrule
        Number of datasets  & 19 & 95 & 987 & 8,953 & 1,221\\
         \midrule
       Total model training time (s) &  2.1 & 8.8 & 63.5 & 839.5 & 109.1\\
        \bottomrule
    \end{tabular}
\end{table}

\textbf{Datasets.} 
Following SOSD~\cite{marcus2020benchmarking}, we use four real datasets:
    \textbf{amzn} (default) -- an Amazon book popularity dataset,
    \textbf{face} -- a Facebook user ID dataset, 
    \textbf{osm} -- an OpenStreetMap cell ID dataset, and 
    \textbf{wiki} -- a Wikipedia edit timestamp dataset.
    We further generate skew datasets 
    from uniform data by raising a key value $x$ to its powers $x^\alpha$ ($\alpha = 1,3,5,7,9$), 
    following~\cite{DBLP:journals/pvldb/QiTCZ18}.
    Each dataset contains 200 million unsigned 64-bit integer keys.

\textbf{Performance metrics.}
We report the index build time, lookup (i.e., point query) time, and update (insertion) time.

\vspace{1mm}
\vspace{-0.1in}
\subsection{Results}
\vspace{-0.5mm}

\textbf{Index build time on real datasets.} 
The four real datasets have different distributions. 
They lead to different index build times as shown in Fig.~\ref{fig:build_time_distribution}. 
BTree is the fastest to build and is little impacted by the data distribution, due to its simple building procedure.
With agile model reuse, RMI-NN-MR is two orders of magnitude faster than RMI-NN on all four datasets, while RMI-MR is also consistently faster than its non-model-reusing counterpart RMI. Our   RMRT further outperforms RMI-NN-MR and PGM, and the advantage is up to 74\% (3.0s vs. 11.7s on face) and 32\% (5.0s vs. 7.4s on face) than PGM, respectively. RS is faster than RMRT, due to its single-pass procedure, but its lookup time is substantially higher than that of RMRT, which is detailed next.

\vspace{-0.1in}
\begin{figure}[h]
    \centering
    
    \subfloat[Index build time~\label{fig:build_time_distribution}]{
    	\includegraphics[width=0.255\textwidth]{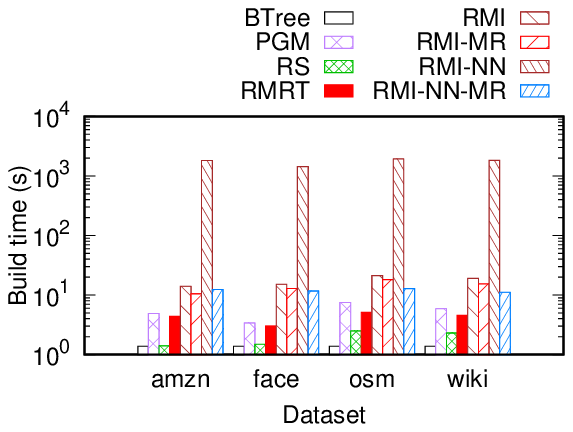}
    }
    \subfloat[Lookup time~\label{fig:query_time_distribution}]{
            \hspace{-6mm}
    	\includegraphics[width=0.255\textwidth]{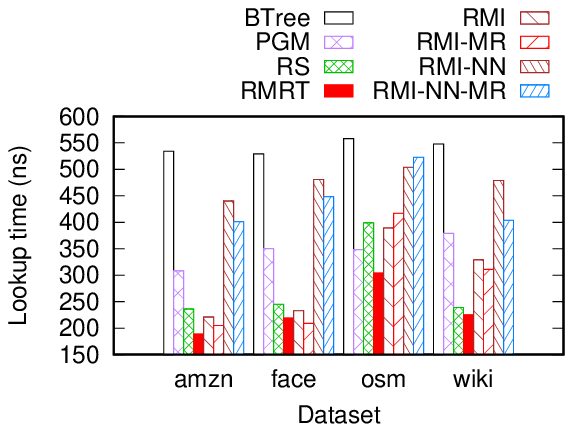}
    }
    \caption{Build and lookup time on real  datasets.}\label{fig:build_and_query_time}
        \vspace{0.5mm}
\end{figure}

\textbf{Lookup time on real datasets.} Each real dataset has 10 million random lookup keys, and we report their average lookup time in Fig.~\ref{fig:query_time_distribution}.  
RMRT is the fastest over all four datasets. On amzn, RMRT (189 ns) is 46\%, 35\%, 14\%, 7\%, 52\%, 38\%, and 20\%
faster than BTree (534 ns), RMI-NN (440 ns),  RMI (221 ns), RMI-MR (205 ns), RMI-NN-MR (401 ns),  PGM (308 ns),  and RS (236 ns), respectively. 
RMI-MR and RMI-NN-MR have better performance than RMI and RMI-NN over amzn, face, and wiki, since these three datasets are well distributed and can be  fitted by the pre-trained models. 
For osm, 
it differs more from the synthetic datasets, 
where  
the reused model has increased lookup costs. 
We further note that RS index stores spline points, the number of which are a decisive factor in RS  lookup time. To obtain the lookup performance shown here, the index size of RS is about an order of magnitude larger than that of RMRT.

\vspace{-0.1in}
\begin{figure}[h]
    \centering

    \subfloat[Index build time~\label{fig:build_time_skew}]{
    	\includegraphics[width=0.255\textwidth]{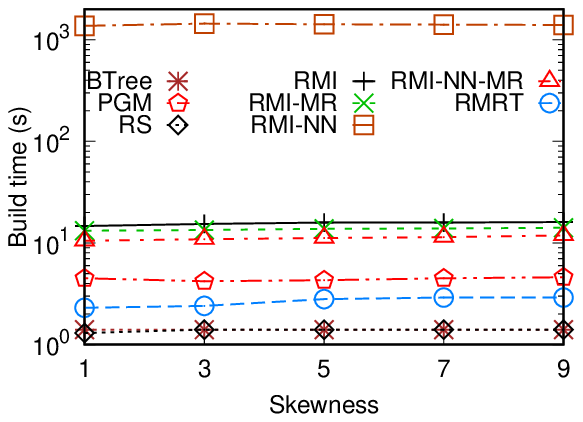}
    }
    \subfloat[Lookup time~\label{fig:query_time_skew}]{
            \hspace{-6mm}
    	\includegraphics[width=0.255\textwidth]{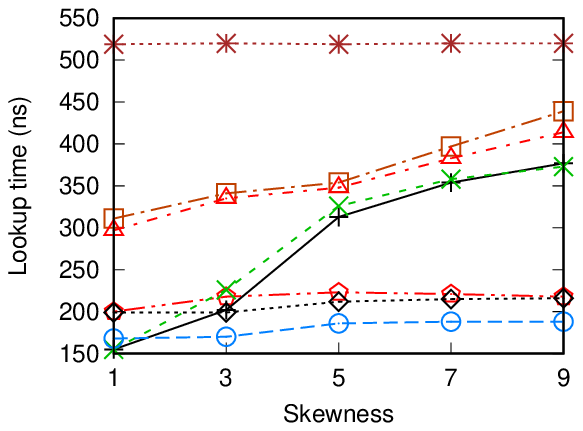}
    }
    \caption{Build and lookup time on skew datasets.}\label{fig:build_and_query_skew_time}
    \vspace{0.5mm}
\end{figure}

\textbf{Index build time on skewed datasets.} 
Since our RMRT targets skewed data, we further test the indices on synthetic data with increasing skewness. As Fig.~\ref{fig:build_time_skew} shows, the index build times are less impacted by data skewness. 
This is consistent with the results on real datasets, where the index build times are also  similar across different datasets. RMI-NN-MR and RMI again outperform RMI-NN and RMI, respectively, while our RMRT is only slightly slower than BTree and the single-pass learned index RS.

\textbf{Lookup time on skewed datasets.} 
As Fig.~\ref{fig:query_time_skew} shows, 
our RMRT again yields the best lookup performance on all  skewed datasets (skewness = 1 denotes uniform data), and its performance is stable as data skewness increases, confirming its capability to adapt to skewed data. In contrast, the four RMI-based indices have fast increasing lookup times when the data skewness increases, as analyzed in Section~\ref{sec:similarity}.  
BTree, PGM, and RS are also less impacted because their designs are based on worst-case scenarios.

\textbf{Index build time under varying $\epsilon$.} Table~\ref{tab:epsilon_update_time} shows that, as $\epsilon$ increases, the build times of both RMI-NN-MR and RMRT increase.   
This is because a larger $\epsilon$ requires a higher similarity for model reuse, thus more  datasets are examined.
For our RMRT, the build time decreases initially. This is because a small $\epsilon$ ($\epsilon=0.5$) cannot fit the datasets well, which creates uneven partitions that take more models to fit. As $\epsilon$ increases beyond $\epsilon=0.6$, the build time of RMRT rises again. Note that RMI-MR shows a similar trend to RMI-NN-MR and is omitted for conciseness.




\textbf{Lookup time under varying $\epsilon$.} Table~\ref{tab:epsilon_update_time} also shows that,
as $\epsilon$ increases, the lookup times of both models decrease. 
This is because better-fitted models are selected for larger $\epsilon$ which bring shorter search ranges. We see that the benefit in lookup outweighs the extra index building costs when using a larger $\epsilon$. 

\vspace{1mm}
\setlength\tabcolsep{4pt}
\begin{table}[h]
\renewcommand\arraystretch{0.5} 
    \centering
        \small
        \caption{Build, lookup and insertion time under varying $\epsilon$.}
        \label{tab:epsilon_update_time}
    \begin{tabular}{l|l|l|l|l|l|l}
    \toprule   
    ~ & $\epsilon$ & 0.5 & 0.6 & 0.7 & 0.8 & 0.9 \\
                      \midrule
                      \midrule
    \multirow{2}{*}{Build (s)} & RMI-NN-MR & 7.6 & 7.9 & 8.3 & 11.8 & 14.4\\
                     ~ & RMRT  & 4.2 & 3.7 & 4.1 & 4.3 & 4.4\\
                      \midrule
    \multirow{2}{*}{Lookup (ns)} & RMI-NN-MR & 572 & 570 & 470 & 466 & 401 \\
                     ~ & RMRT  & 349 & 321 & 274 & 223 & 189 \\
                    \midrule
    \multirow{2}{*}{Insertion (ns)} & RMI-NN-MR & 99 & 106 & 110 & 124 & 132 \\
                     ~ & RMRT  & 105 & 113 & 115 & 121 & 124  \\
                      \bottomrule
    \end{tabular}
\end{table}   

\textbf{Update time under under varying $\epsilon$.} Table~\ref{tab:epsilon_update_time} further shows the times for inserting 100\% more points (following the distribution of amzn, same below). We see that the insertion times increase with $\epsilon$. This is because the bound on the number of insertions before model rebuilding is inversely proportional to $\epsilon$ (Lemma~\ref{lemma:insert_bound}), i.e., a larger $\epsilon$ triggers rebuilds more eagerly.

\textbf{Update time under varying insertion ratios.}
Next, we test the impact of the number of points inserted. RMI, RMI-NN, and RS are static indices and are omitted for this experiment. 
For the fanout (maximum error in PGM), we use $2^{10}$ for PGM and $2^{13}$ for RMRT and RMI-NN-MR, respectively.
As shown in Fig.~\ref{fig:insert_time_ratio}, the insertion times of BTree, RMI-NN-MR, and RMRT increase with the insertion ratio. For PGM, the insertion time rises in stages. It has higher insertion costs than RMI-NN-MR in most cases (from 10\% to 90\%). This is because PGM uses the logarithmic method~\cite{design_dynamic_data_structures} which builds and merges (hence the cost jumps) a set of PGMs for insertions. 
For RMI-NN-MR, when the insertion ratio is within 80\%, the cost increases slowly because most points are inserted directly. The costs increase faster when the insertion ratio exceeds 80\%, which exceeds the insertion bound and leads to more rebuilds.
For RMRT, the insertion time is more stable. This is because each RMRT sub-model indexes a relatively small set of data points and has a relatively small rebuild bound. Model rebuild is triggered steadily as new data points are inserted. 

\vspace{-0.1in}
\begin{figure}[h]
    \centering
       \subfloat[Varying insertion ratio~\label{fig:insert_time_ratio}]{
    	\includegraphics[width=0.255\textwidth]{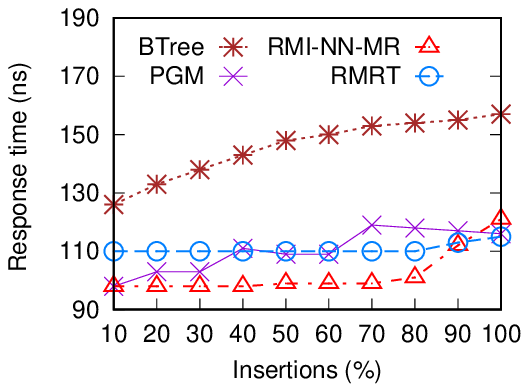}
    	\hspace{-6mm}
    }
     \subfloat[Varying branching parameter~\label{fig:insert_time_branch}]{
    	\includegraphics[width=0.255\textwidth]{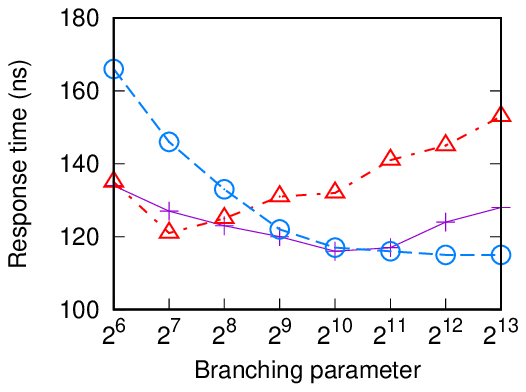}
    }
    \caption{Insertion time results.}\label{fig:insert_time}
\end{figure}

\vspace{2mm}
\textbf{Update time under varying branching parameters.} 
Due to the parameter limitation of dynamic PGM, we vary the fanout (i.e., maximum error in PGM) from $2^{6}$ to $2^{13}$.
As Fig.~\ref{fig:insert_time_branch} shows, 
PGM outperforms RMI-NN-MR when its maximum error (fanout) is larger than $2^8$ because a larger fanout for the RMI learned indices means fewer data points in each model. The cardinality of each model in the second layer is smaller, and the bound for insertion is also smaller (w.r.t.,  Lemma~\ref{lemma:insert_bound}), such that rebuild happens more often. 
For RMRT, the insertion performance becomes better as the fanout increases. This is because it can adaptively divide the underlying models and provide more positions for direct insertions without frequent model rebuilds.



\vspace{1mm}
\vspace{-0.1in}
\subsection{Discussion}
\vspace{-0.5mm}
We note several directions to be explored next. We have omitted the sorting costs in index building, since these are shared by all indices. It would be interesting to further optimize these costs with a learning-based technique. Our CDF similarity approximation considers the maximum distance between the CDFs. An alternative is to take the average distance. How to bound the search range in this case is an interesting challenge.

 \vspace{1mm}
\vspace{-0.1in}
\section{Conclusions}\label{sec:conclusion}
\vspace{-0.5mm}

We proposed to reuse pre-trained models for indexing new (or updated) datasets to address building and update efficiency issues in learned indices. We also proposed a similarity metric to measure the distribution difference between two datasets. Based on this metric, our agile model reuse algorithm can efficiently select the most suitable pre-trained model to index a new (or updated) dataset. We show how the prediction error of the selected pre-trained model is bounded on the new (or updated) dataset. We demonstrate the effectiveness of the proposed algorithm by applying it on the RMI learned indices~\cite{RMI} and our proposed learned index  RMRT.  
Experimental results on synthetic datasets and four real datasets show that our agile model reuse technique can improve the building and update time of learned indices substantially with little impact on the lookup performance. 



\bibliographystyle{ACM-Reference-Format}
\bibliography{references}

\end{document}